\newtheorem{lemma}{Lemma}
\newtheorem{theorem}{Theorem}
\newtheorem{definition}{Definition}
\newcommand{\aqa}{$\langle aQa^L \rangle$ Applied Quantum Algorithms, Leiden University, The Netherlands}
\newcommand{\liacs}{LIACS, Leiden University, Niels Bohrweg 1, 2333 CA, Leiden, The Netherlands}
\newcommand{\bmw}{BMW Group, 80788 München, Germany}
\begin{document}

\title{Note on the Universality of Parameterized IQP Circuits with Hidden Units for Generating Probability Distributions}
\date{\today}

\author{Andrii Kurkin}
\affiliation{\aqa}%
\affiliation{\liacs}%
\affiliation{\bmw}%

\author{Kevin Shen}
\affiliation{\aqa}%
\affiliation{\liacs}%
\affiliation{\bmw}%

\author{Susanne Pielawa}
\affiliation{\bmw}%

\author{Hao Wang}
\affiliation{\aqa}%
\affiliation{\liacs}%

\author{Vedran Dunjko}
\affiliation{\aqa}%
\affiliation{\liacs}%
 
\begin{abstract}
    In a series of recent works, an interesting quantum generative model based on parameterized instantaneous polynomial quantum (IQP) circuits has emerged as they can be trained efficiently classically using any loss function that depends only on the expectation values of observables of the model. The model is proven not to be universal for generating arbitrary distributions, but it is suspected that marginals can be - much like Boltzmann machines achieve universality by utilizing hidden (traced-out in quantum jargon) layers. In this short note, we provide two simple proofs of this fact. The first is near-trivial and asymptotic, and the second shows universality can be achieved with a reasonable number of additional qubits.
\end{abstract}

\maketitle

\section{Introduction}\label{s:intro}
Recently, a promising quantum generative model has been proposed and demonstrated to be efficiently trainable at scale \cite{recio2025train, recioarmengol2025iqpoptfastoptimizationinstantaneous}. The model is a type of Quantum Circuit Born Machine (QCBM) \cite{Liu_2018, Benedetti_2019}, with the circuit specifically chosen as a parameterized instance of Instantaneous Quantum Polynomial (IQP) circuit \cite{Shepherd_2009, Bremner_2010}. The model prepares an $n$-qubit state from a parameterized IQP circuit, and measuring each qubit in the computational basis will generate a bitstring of length $n$. We will call this model IQP-QCBM.

In Ref.~\cite{recio2025train}, it was shown that an IQP-QCBM with $n$ qubits cannot represent arbitrary probability distributions over $n$-bitstrings. Although the model provides \( 2^n -1 \) complex phases, these are insufficient for universality. However, the authors pointed out a potential pathway toward universality:
the model may be universal with additional qubits, i.e., by applying circuits on $m+n$ qubits and considering the
marginal distributions on $n$ qubits (tracing out the first $m$ qubits)~\cite{recio2025train}.

In this short note, we provide two simple constructive proofs that IQP-QCBMs with additional qubits are indeed universal. 
In Section \ref{s:prelims}, we give definitions needed for the proof. In Section \ref{s:results}, we present the main results. We close with a discussion in Section \ref{s:discussion}.

\section{Preliminaries}\label{s:prelims}

\begin{definition}[Parameterized IQP-circuit \cite{recio2025train, recioarmengol2025iqpoptfastoptimizationinstantaneous}]
A parameterized IQP circuit on $n$ qubits is a circuit with the following construction:
\begin{enumerate}
    \renewcommand{\labelenumi}{(\roman{enumi})}
    \item initialize in the state $\lvert0^{\otimes n} \rangle$,
    \item apply a sequence of parameterized gates of the form $\exp(i\theta_j X_j)$, where $X_j$ is a tensor product of Pauli $X$ operators acting on a subset of the qubits, 
    \item measure all qubits in the computational basis.
    \end{enumerate}
\end{definition}

The model above is equivalent to a model where, in place of $(ii)$, we consider a three-layer model where we first apply a layer of Hadamard gates, followed by arbitrary diagonal operators (with respect to the computational basis), and a final layer of Hadamard operators.
In our derivations, we will be utilizing this representation. 

\begin{definition}[Parameterized IQP circuit with hidden units]
A parameterized IQP circuit with hidden units is a parameterized IQP circuit in which a chosen subset of qubits is traced out.
\end{definition}
We have here adopted the terminology from the model of Boltzmann machines where analogously using additional neurons that are traced out (marginalized), we obtain universality (see \cite{6796877}).

\begin{definition}[$k$-sparse distribution]
Consider a probability distribution $p$ on the discrete sample space $\mathcal{X}$. $p$ is said to be $k$-sparse if $p$ has nonzero probability mass for at most $k$ outcomes, i.e.,
\begin{equation}
\lvert\{x\in \mathcal{X}|p(x)>0\}\rvert \leq k
\end{equation}
\end{definition}

\section{Results}\label{s:results}
\subsection{Approximate universality}
We first provide a very simple proof of asymptotic universality for IQP-QCBMs with hidden units, where we only require real-valued ($\theta \in \{0, \pi\}$) phases.
\begin{lemma}[Asymptotic approximate universality of IQP-QCBM with hidden units]
For any probability distribution $p$ over $\{0,1\}^n$, there exists a parameter setting $\theta = \{\theta_{j,k}\}_{j \in \{0,1\}^m, k \in \{0,1\}^n}$ of the IQP circuit acting on $m + n$ qubits  such that the marginal distribution $q_{\theta}$ on the $n$ final qubits satisfies:
\begin{equation*}
    \delta(p, q_{\theta}) \in O\left(\frac{1}{2^{m-n}}\right)
\end{equation*}
where $\delta$ denotes the total variation distance between distributions.
\end{lemma}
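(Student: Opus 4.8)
The plan is to work in the three-layer $H^{\otimes N} D H^{\otimes N}$ representation on $N=m+n$ qubits, where the real-valued restriction $\theta\in\{0,\pi\}$ makes $D$ a diagonal \emph{sign} operator $D=\mathrm{diag}\big((-1)^{f(y)}\big)$ for an arbitrary Boolean $f:\{0,1\}^{N}\to\{0,1\}$. Writing $g=(-1)^f$, the output amplitude on a basis state $y=(a,b)$, with $a\in\{0,1\}^m$ hidden and $b\in\{0,1\}^n$ visible, is the Walsh--Fourier coefficient $\widehat g(a,b)=2^{-N}\sum_{x}(-1)^{x\cdot(a,b)}g(x)$. The key point I would exploit is that an arbitrary diagonal $D$ lets us fix each sign $g(x,z)$ independently; in particular we may prescribe the $2^m$ slices $G_a(\cdot):=g(a,\cdot):\{0,1\}^n\to\{\pm1\}$ freely and independently of one another.

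The engine of the proof is a formula for the visible marginal. Summing the joint distribution $\widehat g(a,b)^2$ over the hidden register and collapsing the cross terms using $\sum_{a}(-1)^{a\cdot(x\oplus x')}=2^m\,\mathbb{1}[x=x']$ gives
\begin{equation*}
q_\theta(b)=\sum_{a}\widehat g(a,b)^2=\frac{1}{2^m}\sum_{a\in\{0,1\}^m}\widehat{G_a}(b)^2 .
\end{equation*}
This identifies the visible marginal as the \emph{uniform mixture over the $2^m$ hidden slices} of single-register real-IQP distributions $P_a(b)=\widehat{G_a}(b)^2$, each a legitimate $n$-qubit distribution by Parseval. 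This reduction --- from a marginal over $m+n$ qubits to an average of $2^m$ ordinary $n$-qubit IQP distributions --- is the step I expect to carry the whole argument, and verifying it (the orthogonality collapse) is the main, though routine, technical obstacle.

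With the mixture formula in hand, the remainder is a counting construction. I would choose each slice to be a Walsh character, $G_a(z)=(-1)^{b_a\cdot z}$, for which $\widehat{G_a}(b)=\mathbb{1}[b=b_a]$ and hence $P_a=\delta_{b_a}$ is a point mass at an arbitrary $b_a\in\{0,1\}^n$. Then $q_\theta(b)=c_b/2^m$ with $c_b=|\{a:b_a=b\}|$ realizes any dyadic distribution with denominator $2^m$. Finally I would approximate the target $p$ by such a dyadic distribution: rounding $2^m p(b)$ to nonnegative integers $c_b$ with $\sum_b c_b=2^m$ (feasible once $m\ge n$) and $|p(b)-c_b/2^m|\le 2^{-m}$ for every $b$, which bounds the total variation distance by
\begin{equation*}
\delta(p,q_\theta)=\tfrac12\sum_b\Big|p(b)-\frac{c_b}{2^m}\Big|\le \tfrac12\,2^n\,2^{-m}=2^{\,n-m-1}\in O\!\left(\frac{1}{2^{m-n}}\right).
\end{equation*}
The only mild subtleties left to check are that the independent-slice freedom is genuinely available in the arbitrary-diagonal model and that the rounding can be arranged to sum exactly to $2^m$; both are straightforward.
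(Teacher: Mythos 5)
Your proof is correct and takes essentially the same route as the paper: restrict to real ($0/\pi$) phases, assign each hidden configuration a Walsh character on the visible register (equivalently, a point mass after the final Hadamard layer), so the visible marginal is an arbitrary dyadic distribution with denominator $2^m$, then apply the same rounding argument to get total variation error at most $\tfrac12\,2^{n-m}$. The only difference is presentational --- you derive the mixture-over-slices formula $q_\theta(b)=\frac{1}{2^m}\sum_a \widehat{G_a}(b)^2$ explicitly via Fourier orthogonality, whereas the paper reaches the same fact by constructing the state $|\psi'\rangle$ and noting that Hadamards map computational basis states to strings of Pauli-$X$ eigenstates --- but the underlying construction and error analysis are identical.
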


\begin{proof}
Consider the quantum state before the last layer of Hadamard gates. It can be written as:
\begin{equation}
    \ket{\psi} =  \frac{1}{\sqrt{2^{m+n}}}\sum_{j \in \{0,1\}^m} \sum_{k \in \{0,1\}^n} e^{i \theta_{j,k}} |j\rangle |k\rangle,
\end{equation}
where the phases $\theta_{j,k}$ are free. Now consider the state:
\begin{equation}
    \ket{\psi'} = \frac{1}{\sqrt{2^{m+n}}} \sum_{j = 0}^{2^m-1} \sum_{k=0}^{2^n-1} |j\rangle |v(k)\rangle,
\end{equation}
where $v(k)$ is an arbitrary function from $\{0, \dots, 2^m-1\}$ to $\{0,1\}^n$.

The probability of measuring some bitstring $b$ on the last $n$ qubits of the state $\ket{\psi'}$ is given by:
\begin{equation}
    P(b) = \frac{| \{ j \in \{0,1\}^m \mid v(j) = b \}|}{2^m}.
\end{equation}
In the limit of large $m$, we can choose the function $v(j)$ such that the probability distribution $P(b)$ approaches the desired probability distribution to arbitrary precision with respect to the infinity norm (and thus also total variation distance).

Finally, we notice that if we apply Hadamard gates to each of the last $n$ qubits of $\ket{\psi'}$, we obtain a state of the form:
\begin{equation}
    \frac{1}{\sqrt{2^{m+n}}} \sum_{j = 0}^{2^m-1} \sum_{k=0}^{2^n-1} |j\rangle |h(j)\rangle,
\end{equation}
where each $|h(j)\rangle$ is a string of Pauli-X eigenstates, so a superposition of all computational basis states with differing signs. 

Thus, the overall state is in the form of $\ket{\psi}$, making it a valid state of an IQP-QCBM (which is an UMA state; see Lemma 3). The final layer of Hadamards on all qubits ensures the desired marginal distribution.

We now derive the approximation error in total variation distance as a function of $m$. Our model generates a restricted class of distributions where each component has $m$-bit precision. Nevertheless, we prove that any distribution can still be approximated by our model with total variation error $O\left(\frac{1}{2^{m-n}}\right)$

Note, that any probability $p_j$ of a probability distribution over $n$-bit strings can be expressed in the form $p_j = \frac{i_j}{2^m} + \varepsilon_j$, where $i_j \in \mathbb{N}, (2^m \times \varepsilon_j) \in [0, 1]$, $\sum_j \frac{i_j}{2^m} + \varepsilon_j = 1$, and $r := 2^m \sum_j \varepsilon_j \in \mathbb{N}$. We present an algorithm that constructs a distribution $q$ which can be output using our construction above using $m$ hidden units, such that $\delta(p, q) \in O\left(\frac{1}{2^{m-n}}\right)$.

\begin{equation}
    q_j = \begin{cases}
           \frac{i_j}{2^m} + \frac{1}{2^m} & \text{if } j \leq r\\
           \frac{i_j}{2^m} & \text{if } j > r 
        \end{cases}
\end{equation}

It is easy to verify that $q$ is a valid probability distribution since $\sum_j q_j = 1$.

\begin{equation}
    \delta(p, q) = \frac{1}{2} \sum_{j=1}^{2^n} |p_j - q_j| 
    = \frac{1}{2} \left( \sum_{j=1}^r \left| \frac{1-\varepsilon_j}{2^m} \right| 
    + \sum_{j=r+1}^{2^n} \frac{\varepsilon_j}{2^m} \right) 
    \leq \frac{1}{2} \left(\frac{r}{2^m} + \frac{2^n - r}{2^m}\right) 
    = \frac{1}{2} \frac{1}{2^{m-n}}
\end{equation}

\end{proof}
Of course, this statement is not the most efficient possible, as the construction utilizes only $0$ and $\pi$ relative phases. By using all possible phases, a stronger result can be obtained as we show next.

\subsection{Exact universality}

\begin{lemma}\label{lemma: rho_b decomposition}
    Let the total quantum system consist of $m+n$ qubits, where the quantum state is prepared using an IQP circuit, specified by diagonal matrix $D(\theta) = \sum_{x \in \{0,1\}^m} \sum_{y \in \{0,1\}^n} e^{i\theta_{x, y}} |x\rangle \langle x| \otimes |y\rangle \langle y|$ with the final layer of Hadamard gates omitted. Then, the reduced density matrix $\rho_2$ of the second, $n$-qubit, subsystem is given by
    \begin{equation}
        \rho_2 = \frac{1}{2^m} \sum_{k \in \{0,1\}^m} |\psi_k\rangle \langle \psi_k|,
    \end{equation}
    where each state $|\psi_k\rangle$ takes the form
    \begin{equation}
        |\psi_k\rangle = \frac{1}{\sqrt{2^n}} \sum_{y \in \{0,1\}^n} e^{i\theta_{k, y}} |y\rangle,\label{eq: psi}
    \end{equation}
    and the phases $\theta_{k, y}$ are trainable parameters of the IQP circuit.
\end{lemma}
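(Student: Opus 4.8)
The plan is to compute the reduced density matrix $\rho_2$ by a direct partial trace, starting from the explicit three-layer form of the IQP state. First I would write down the state produced just before the omitted final Hadamard layer. Beginning from $|0^{\otimes(m+n)}\rangle$, the initial Hadamard layer produces the uniform superposition over all computational basis states, and the diagonal operator $D(\theta)$ attaches the phase $e^{i\theta_{x,y}}$ to each of them, giving
\begin{equation}
  |\Psi\rangle = \frac{1}{\sqrt{2^{m+n}}} \sum_{x \in \{0,1\}^m} \sum_{y \in \{0,1\}^n} e^{i\theta_{x,y}} |x\rangle |y\rangle .
\end{equation}

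Next I would form the full density matrix $|\Psi\rangle\langle\Psi|$ and trace out the first $m$-qubit register. Expanding the outer product produces a double sum over index pairs $(x,y)$ and $(x',y')$, and applying $\mathrm{Tr}_1\big(|x\rangle\langle x'|\big)=\delta_{x,x'}$ collapses the first register, annihilating all cross terms with $x \neq x'$. This leaves
\begin{equation}
  \rho_2 = \frac{1}{2^{m+n}} \sum_{x \in \{0,1\}^m} \sum_{y, y' \in \{0,1\}^n} e^{i(\theta_{x,y} - \theta_{x,y'})} |y\rangle\langle y'| .
\end{equation}

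The final step is to recognize this expression as the claimed convex mixture. For each fixed value $k$ of the first register, defining $|\psi_k\rangle$ as in Eq.~(\ref{eq: psi}) and expanding $|\psi_k\rangle\langle\psi_k|$ reproduces exactly the inner double sum over $y,y'$ weighted by $1/2^n$, so that $\rho_2 = \frac{1}{2^m}\sum_{k} |\psi_k\rangle\langle\psi_k|$. To close the argument cleanly I would note that each $|\psi_k\rangle$ is normalized and the weights $1/2^m$ sum to one, confirming that $\rho_2$ is a genuine uniform mixture of the pure states $|\psi_k\rangle$.

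Since every step is a routine manipulation, I do not expect a serious obstacle; the only point requiring care is the bookkeeping of the partial trace, namely verifying that it is precisely the orthogonality of the first register that removes the off-diagonal ($x \neq x'$) contributions and thereby turns what is a priori a coherent superposition into an incoherent mixture indexed by $k$. The conceptual content of the lemma resides entirely in this observation: tracing out the hidden register converts the $2^m$ phase patterns $\{\theta_{k,\cdot}\}$ into $2^m$ equally weighted pure states on the visible register.
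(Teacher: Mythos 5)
Your proposal is correct and follows essentially the same route as the paper: write the post-diagonal-layer state as a uniform superposition with phases $e^{i\theta_{x,y}}$, take the partial trace over the first register so that orthogonality ($\delta_{x,x'}$) kills the cross terms, and factor the surviving sum into the uniform mixture $\frac{1}{2^m}\sum_k |\psi_k\rangle\langle\psi_k|$. The only cosmetic difference is that you work with the state vector and then form the outer product, while the paper manipulates the density matrix directly; the computations are identical.
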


\begin{proof}
    \begin{equation}
    \rho_{1+2} = D (\theta)H^{\otimes (m+n)}|0\rangle^{\otimes (m+n)} \langle 0|^{\otimes (m+n)} H^{\otimes (m+n)} D^{\dagger} (\theta)= \frac{1}{2^{m+n}} \sum_{x,x'\in \{0,1\}^n} \sum_{y,y' \in \{0,1\}^m} e^{i(\theta_{x,y} - \theta_{x',y'})} |x\rangle |y\rangle \langle x'| \langle y'|
\end{equation}

Reduced density matrix for the second subsystem:
\begin{align}
    \rho_2 &= \sum_{k \in \{0,1\}^n} \frac{1}{2^{m+n}} \sum_{x,x'\in \{0,1\}^m} \sum_{y,y' \in \{0,1\}^n} e^{i(\theta_{x,y} - \theta_{x',y'})} \langle k|x\rangle \langle x'|k\rangle \otimes |y\rangle \langle y'| = \notag \\
    &= \sum_{y,y'\in \{0,1\}^n} \left(\frac{1}{2^{m+n}} \sum_{k \in \{0,1\}^m} e^{i(\theta_{k,y} - \theta_{k,y'})} \right) |y\rangle \langle y'| = \notag \\&= \frac{1}{2^m}\sum_{k \in \{0,1\}^m} \underbrace{\left(\frac{1}{\sqrt{2^n}}\sum_{y\in \{0,1\}^n} e^{i\theta_{k,y}}|y\rangle \right)}_{|\psi_k\rangle} \left(\frac{1}{\sqrt{2^n}}\sum_{y'\in \{0,1\}^n} e^{-i\theta_{k,y'}}\langle y'| \right) = \frac{1}{2^m} \sum_{k \in \{0,1\}^m} |\psi_k\rangle \langle \psi_k|
\end{align}
\end{proof}

In the below we use tilde to denote bitstrings specifying states which are strings of Pauli-X eigenstates.

\begin{lemma}\label{lemma: 2-sparse distributin with psi_k}
For any quantum state $|\psi_k\rangle = \alpha |\tilde{s}_1\rangle + \beta |\tilde{s}_2\rangle$ where $\tilde{s}_1$, $\tilde{s}_2 \in \{+,- \}^n$, there exist parameter setting $\theta$ in the state defined in Eq.~\eqref{eq: psi}, such that
$\bra{\tilde{s}_1}\ket{\psi_k} =  |\alpha|^2$.
\end{lemma}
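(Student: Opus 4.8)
The plan is to pass to the computational basis, where both the target state $\alpha\ket{\tilde{s}_1}+\beta\ket{\tilde{s}_2}$ and the admissible states from Eq.~\eqref{eq: psi} can be compared amplitude by amplitude. First I would identify $\{+,-\}^n$ with $\{0,1\}^n$ via $+\mapsto 0,\ -\mapsto 1$ and expand each Pauli-$X$ eigenstring as $\ket{\tilde{s}}=\frac{1}{\sqrt{2^n}}\sum_{y\in\{0,1\}^n}(-1)^{s\cdot y}\ket{y}$, so that the target state carries the computational amplitude $\frac{1}{\sqrt{2^n}}\bigl(\alpha(-1)^{s_1\cdot y}+\beta(-1)^{s_2\cdot y}\bigr)$ at $\ket{y}$. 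Comparing with Eq.~\eqref{eq: psi}, all of whose amplitudes have modulus exactly $1/\sqrt{2^n}$, the target is realizable as some $\ket{\psi_k}$ precisely when $\bigl|\alpha(-1)^{s_1\cdot y}+\beta(-1)^{s_2\cdot y}\bigr|=1$ for every $y$.

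Next I would reduce this family of constraints to one scalar condition. Factoring out the unit sign $(-1)^{s_1\cdot y}$, the modulus depends only on the parity $(s_1\oplus s_2)\cdot y$, taking the value $|\alpha+\beta|$ when this parity is even and $|\alpha-\beta|$ when it is odd; since $s_1\neq s_2$ the functional $y\mapsto(s_1\oplus s_2)\cdot y$ is nonzero and hence surjective onto $\{0,1\}$, so both parities actually occur. The constraint therefore collapses to $|\alpha+\beta|=|\alpha-\beta|=1$, which, together with $|\alpha|^2+|\beta|^2=1$, is equivalent to the single condition $\mathrm{Re}(\alpha\bar\beta)=0$.

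The crucial observation is that this condition restricts only the relative phase of $\alpha$ and $\beta$, never the weight $|\alpha|^2$. For any desired $|\alpha|^2$ (with $|\beta|^2=1-|\alpha|^2$), the choice $\alpha=|\alpha|$, $\beta=i\,|\beta|$ has relative phase $\pi/2$ and satisfies $\mathrm{Re}(\alpha\bar\beta)=0$, making both $|\alpha\pm\beta|$ equal to $1$. With this choice I would simply set $\theta_{k,y}:=\arg\bigl(\alpha(-1)^{s_1\cdot y}+\beta(-1)^{s_2\cdot y}\bigr)$; by construction the resulting state of Eq.~\eqref{eq: psi} equals $\alpha\ket{\tilde{s}_1}+\beta\ket{\tilde{s}_2}$ exactly. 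Orthonormality of the Pauli-$X$ eigenstates then gives $\braket{\tilde{s}_1}{\psi_k}=\alpha$, so the Pauli-$X$ measurement probability of the outcome $\tilde{s}_1$ is $|\braket{\tilde{s}_1}{\psi_k}|^2=|\alpha|^2$, which is the asserted identity.

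The main obstacle is the magnitude-matching step: a generic pair $(\alpha,\beta)$ does \emph{not} yield equal-modulus computational amplitudes, so the construction cannot work for literally arbitrary coefficients. The content of the lemma is that $\mathrm{Re}(\alpha\bar\beta)=0$ is the \emph{only} obstruction and that it is harmless, since it leaves the target weight $|\alpha|^2$ completely free. I would therefore make the phase normalization explicit rather than treat $\alpha$ and $\beta$ as wholly unconstrained, and emphasize that every $2$-sparse probability profile over a chosen pair of X-basis strings is attainable by a valid IQP state $\ket{\psi_k}$.
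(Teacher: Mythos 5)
Your proof is correct, and it takes a genuinely different route from the paper's. The paper argues structurally: it introduces the notion of a UMA (uniform-magnitude) state, settles the single-qubit case by noting that $|+_\theta\rangle = \tfrac{1}{\sqrt{2}}(|0\rangle + e^{i\theta}|1\rangle)$ is UMA with $|\langle + | +_\theta\rangle|^2 = \cos^2(\theta/2)$ sweeping all probabilities, and then lifts this to $n$ qubits by a reduction: after swapping qubits so that the two strings first differ in position one, it writes the state as $\alpha|+\rangle|\tilde{s}_1'\rangle + \beta|-\rangle|\tilde{s}_2'\rangle$ and applies the ``Hadamard-controlled'' unitary $\mathrm{hc}U = |+\rangle\langle+|\otimes I + |-\rangle\langle-|\otimes U$, where $U$ is the diagonal $\pm 1$ unitary mapping $|\tilde{s}_1'\rangle$ to $|\tilde{s}_2'\rangle$; this operator turns out to be a permutation matrix, and permutation matrices preserve the UMA property. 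You instead compute directly in the computational basis: realizability by a state of the form of Eq.~\eqref{eq: psi} is precisely the condition that all amplitudes $\alpha(-1)^{s_1\cdot y}+\beta(-1)^{s_2\cdot y}$ have unit modulus, which (using $s_1\neq s_2$ so that both parities of $(s_1\oplus s_2)\cdot y$ occur) collapses to $|\alpha+\beta|=|\alpha-\beta|=1$, i.e.\ $\mathrm{Re}(\alpha\bar{\beta})=0$ --- a constraint on the relative phase only, leaving the weight $|\alpha|^2$ free. Your route buys an exact characterization: you identify $\mathrm{Re}(\alpha\bar{\beta})=0$ as the necessary \emph{and} sufficient obstruction, making explicit something the paper leaves implicit, namely that the lemma cannot hold for literally arbitrary coefficients $(\alpha,\beta)$, and that what is really being proved (in both arguments) is that every $2$-sparse probability profile over Pauli-$X$ eigenstrings is attainable after a suitable phase choice; indeed, the paper's single-qubit solution $\alpha = (1+e^{i\theta})/2$, $\beta=(1-e^{i\theta})/2$ satisfies your condition, as $\alpha\bar{\beta} = \tfrac{i}{2}\sin\theta$ is purely imaginary. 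The paper's route, in exchange, isolates a reusable structural fact (permutation invariance of uniform magnitudes) and avoids amplitude bookkeeping. One minor point: handle the degenerate case $\tilde{s}_1=\tilde{s}_2$ explicitly, since your parity argument assumes $s_1\neq s_2$; the paper dispatches it by taking $p=1$.
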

\begin{proof}
We prove it in two steps. We say a normalized quantum state is a \emph{UMA} (uniform magnitude) state if it has all amplitudes of the same absolute value (with respect to the computational basis).  
Note that all states which are strings of $X$ eigenstates are UMA.\\
Next, note that for all probabilities $p \in [0,1]$, there exist $\alpha$, $\beta$ such that  
$\alpha |+\rangle + \beta |-\rangle$ is UMA.  
To see this, define
\[
|+_{\theta}\rangle = \frac{1}{\sqrt{2}} (|0\rangle + e^{i\theta} |1\rangle)
\]
and compute
\[
|\langle + | +_{\theta} \rangle|^2 = \cos^2(\theta/2).
\]
The right-hand side can attain all possible $p$ values. \\
Now we claim: if $\alpha$ and $\beta$ are such as above, then for all strings $\ket{\tilde{s}_1}$ and $\ket{\tilde{s}_2}$ of $X$ eigenstates, the state 
\[
|\psi\rangle = \alpha \ket{\tilde{s}_1} + \beta\ket{\tilde{s}_2}
\]
is UMA.

If $\tilde{s}_1 = \tilde{s}_2$, then the claim is trivial as we can set $p = 1$.

If not, let $j$ be the first position where $(\tilde{s}_1)_j \neq (\tilde{s}_2)_j$. We can assume $j = 1$. If not, we can swap the first and $j$-th qubit - this state will be UMA if and only if $|\psi\rangle$ was UMA.

So without loss of generality, we can write:  
\[
|\psi\rangle = \alpha |+\rangle |\tilde{s}_1'\rangle + \beta |-\rangle |\tilde{s}_2'\rangle.
\]
and it will suffice to prove this is an UMA state.
If $\tilde{s}_1' = \tilde{s}_2'$, we are done as $|\psi\rangle$ factorizes, and the tensor product of a UMA state (on the first qubit by previous analysis) and another UMA state is UMA.

If $\tilde{s}_1' \neq \tilde{s}_2'$, define $U$ as the $(n-1)$-qubit diagonal unitary such that  
\[
|\tilde{s}_2'\rangle = U |\tilde{s}_1'\rangle.
\]
Note that $U$ is a diagonal unitary with $\pm1$ on the diagonal.  
Define the Hadamard-controlled-$U$:
\[
\text{hc}U = |+\rangle\langle+| \otimes I_{n-1} + |-\rangle\langle-| \otimes U.
\]

Note that:  
(1)  
\[
\text{hc}U (\alpha |+\rangle |\tilde{s}_1'\rangle + \beta |-\rangle |\tilde{s}_1'\rangle) = |\psi\rangle,
\]  
and  
(2) hc$U$ is a block matrix with blocks $(I+U)/2$ on the diagonal and $(I-U)/2$ on the antidiagonal.

Note that in each row, we have only one non-zero element. So this is a permutation matrix. For a permutation matrix $U$, it holds that $|\psi\rangle$ is UMA if and only if $U|\psi\rangle$ is UMA.

So since  
\[
\alpha |+\rangle |\tilde{s}_1'\rangle + \beta |-\rangle |\tilde{s}_1'\rangle
\]
is UMA by previous analysis, so is $|\psi\rangle$.
\end{proof}

Next, we present the two key probability distribution decomposition lemmas in the language of probability vectors, i.e., vectors $p=(p_j)_j$, $\sum_j p_j = 1,\ p_j\geq 0$, encoding the probability of the measurement of each of the bitstrings (indexed by $j$).  
We will say a (probability) vector is $k$-sparse if it has at most $k$ non-zero entries.

\begin{lemma}
Every $N$-dimensional probability vector $p$ can be expressed as a uniform mixture of $N$ 3-sparse probability vectors. That is, there exists a set $\{ q(i)\}_i$ of $N$ 3-sparse probability vectors such that
$$
p = \sum_i \frac{1}{N} q(i).
$$
\end{lemma}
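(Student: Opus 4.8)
The plan is to prove the statement by an alias-method / water-filling argument, organized as an induction on the dimension $N$. It is convenient to rescale: writing $p = \frac{1}{N}\sum_i q(i)$ with each $q(i)$ a probability vector is equivalent to decomposing the vector $Np$, whose entries are nonnegative and sum to $N$, into $N$ probability vectors that each sum to $1$. The governing invariant is that the \emph{average} entry of the rescaled vector is exactly $1$; this is what guarantees that at every stage there is both an entry small enough to fit inside the remaining ``bin'' and an entry large enough to top that bin up to capacity. I will in fact show that each $q(i)$ can be taken $2$-sparse, which a fortiori gives the stated $3$-sparse bound.

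For the inductive step in dimension $N$, let $a = \min_j p_j$; since the $N$ entries sum to $1$ we have $a \le 1/N$. If $a = 1/N$ then all entries equal $1/N$, the distribution is uniform, and I simply take $q(i) = e_i$ (the $i$-th standard basis vector), each $1$-sparse. Otherwise pick an outcome $j_\star$ attaining the minimum $a < 1/N$ and an outcome $\ell_\star$ of maximum mass $b = \max_j p_j$ (these are distinct in the non-uniform case); since $b$ is at least the average we have $b \ge 1/N \ge 1/N - a \ge 0$. I then peel off a single bin of total mass $1/N$ consisting of all of the mass $a$ on $j_\star$ together with mass $1/N - a$ drawn from $\ell_\star$. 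Rescaled by $N$, this bin is the $2$-sparse probability vector $q(N)$ placing weight $Na$ on $j_\star$ and weight $1 - Na$ on $\ell_\star$.

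What remains after removing this bin is $r = p - \frac{1}{N} q(N)$, which is nonnegative, has total mass $(N-1)/N$, and has lost outcome $j_\star$ from its support, so $r$ is supported on at most $N-1$ outcomes. Setting $p' = \frac{N}{N-1}\, r$ yields a probability vector of dimension at most $N-1$, to which the induction hypothesis applies, giving $p' = \frac{1}{N-1}\sum_{i=1}^{N-1} q(i)$ with each $q(i)$ at most $2$-sparse. Unwinding the scaling, $r = \frac{N-1}{N}\, p' = \frac{1}{N}\sum_{i=1}^{N-1} q(i)$, and hence $p = \frac{1}{N}\sum_{i=1}^{N} q(i)$, a uniform mixture of exactly $N$ probability vectors that are each at most $2$-sparse. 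The base case $N = 1$ is immediate with $q(1) = (1)$.

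The step I expect to be the main obstacle is not any single calculation but the bookkeeping that keeps the induction tight: I must ensure that each peel simultaneously (i) removes at least one outcome from the support, so that the residual genuinely lives in dimension $N-1$ and the recursion terminates after producing \emph{exactly} $N$ bins, and (ii) respects the sparsity budget. Both are secured by the choice to empty the \emph{smallest} entry in full and top it up from the \emph{largest} entry: the average-equals-$1/N$ invariant guarantees the smallest entry fits inside a bin and the largest can complete it, so every bin touches at most two outcomes while the support strictly shrinks. Equivalently, one may phrase the entire argument non-recursively as Walker's alias method applied to $Np$, which is perhaps the cleanest way to present it.
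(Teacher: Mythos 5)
Your proof is correct, and it takes a genuinely different route from the paper's --- in fact it establishes a strictly stronger statement than the one asked. The paper proceeds non-recursively: it builds an explicit $N\times N$ allocation matrix $Q$ with column sums $p_j$ and row sums $1/N$, by sorting the entries of $p$, placing the small ones ($p_j \le 1/N$) on the diagonal, and greedily water-filling the large ones across the remaining row capacity; a case analysis of consecutive iterations shows each row acquires at most \emph{three} nonzero entries, and the rows (rescaled by $N$) are the 3-sparse vectors. Your induction is instead Walker's alias method, as you note: retire one index per step by pairing the minimum entry $a \le 1/N$ with the maximum entry $b \ge 1/N$, emit the 2-sparse bin with weights $(Na,\,1-Na)$, and recurse on the rescaled residual. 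The arithmetic checks out ($r_{\ell_\star} = b - (1/N - a) > 0$, total residual mass $(N-1)/N$, retired coordinate removed), and the average-equals-$1/N$ invariant is exactly what legitimizes every step. The only wording to tighten is ``dimension \emph{at most} $N-1$'': to get exactly $N$ bins you should delete the retired coordinate and apply the hypothesis in dimension exactly $N-1$, then zero-pad the resulting vectors back to dimension $N$ (which preserves 2-sparsity); the support possibly being smaller is irrelevant. What your approach buys is substantial: you obtain $N$ vectors that are \emph{2-sparse}, whereas the paper gets 3-sparse from $N$ vectors and needs $2N$ vectors (Lemma~\ref{lemma: any distributin with 2-sparse}) to reach 2-sparsity. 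The paper's ``Remark on optimality'' explicitly states that a decomposition of any $N$-dimensional probability vector into a uniform mixture of just $N$ 2-sparse vectors is something the authors ``failed to provide a provably correct construction'' for; your construction settles this, and, plugged into the main theorem, it reduces the hidden-register overhead from $m = n+1$ to $m = n$ qubits and renders the paper's $2N$-vector lemma unnecessary.
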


\textit{Proof.} We give a constructive proof by showing there exists an allocation matrix $Q\in\mathbb{R}^{N\times N}$ of $p$ such that the column sum $\sum_i Q_{ij} = p_j$ and the row sum $\sum_j Q_{ij} = 1/N$, which is also 3-row-sparse. The 3-sparse probability vectors are the rows of the $Q$, i.e., $q(i)/N$ is the $i$th row of $Q$.
We assume $p_1\leq p_2 \leq \cdots \leq p_N$ w.l.o.g. Let $k$ be the smallest index such that $p_{k+1} \geq 1/N$. Note $k=1$ if the distribution is uniform and can be as large as $N-1$, but not $N$. \\
The general idea is that we first allocate $p_1,\ldots, p_k$ to the first $k$ diagonal entries of $Q$. Since $p_1\leq p_2\leq\cdots \leq p_k \leq 1/N$, we have some capacity left on each row, and hence we can further try to spread out $p_{k+1}$ to the first $k$ rows. After $p_{k+1}$ is fully spread out, we proceed with $p_{k+2}$ and so on. Note that whenever the first rows are out of capacity, we will proceed using the remaining $N-k$ rows. The matrix $Q$ can be constructed and verified with the following three steps. \\
Step 1: Initialization.
$Q = 
\begin{bmatrix}
Q_0 &0 \\
0 & 0
\end{bmatrix}
,\;
Q_0 = 
\begin{bmatrix}
p_1 & 0   & \cdots & 0 \\
0   & p_2 & \cdots & 0 \\
\vdots & \vdots & \ddots & \vdots \\
0   & 0   & \cdots & p_k
\end{bmatrix}
$.
\\
Step 2: Iterate over $\ell=k+1, \ldots, N$. For each iteration, we spread out $p_\ell$ to the $\ell$ column of $Q$ as follows. For $i$ running from 1 to $N$, we set
\begin{equation}\label{eq:Q_update}
    Q_{i\ell} = \min\bigg\{\frac{1}{N} - \sum_{j}Q_{ij}, p_\ell - \sum_{\alpha} Q_{\alpha \ell}\bigg\}.
\end{equation}
Note that, $N^{-1} - \sum_{j}Q_{ij}$ is the remaining capacity of row $i$ (recall each row sums to $1/N$) and $p_\ell - \sum_{\alpha < i} Q_{\alpha \ell}$ is the residual of $p_\ell$ after spreading it over the first $i-1$ row. Two indicator functions check if a row is out of capacity or $p_\ell$ is completely spread out. \\
Step 3: Correctness. After iteration $\ell = k+1$, we must have at most two nonzero entries in each row since initially, $Q$ has at most one nonzero value per each row, and by Eq.~\eqref{eq:Q_update}, we only modify one entry of each row. Let $r$ be the largest row index such that $Q_{r\ell}\neq 0, \ell = k+1$. After the next iteration $\ell = k+2$, we have two cases: (1) if the capacity of row $r$ is used up by $p_{k+1}$, then the first nonzero entry of column $k+2$ starts at row $r$. In this case, $Q$ has at most two nonzero entries per row; (2) if some capacity of row $r$ is left, i.e., $\sum_{j} Q_{rj} \leq 1/N$, then first nonzero entry of column $k+2$ starts at row $r+1$, giving rise to three nonzero entries for this row. Note that the above argument between iterations $k+1$ and $k+2$ holds for any two iterations $\ell$ and $\ell+1$. Hence, we conclude that after iteration $\ell \geq k+1$, the following conditions are true:
\begin{enumerate}
    \item $\sum_{i} Q_{ij} = p_j$ for $1 \leq j \leq \ell$;
    \item There are at most three nonzero entries in each row of $Q$.
\end{enumerate} 
Now, proceed with the above argument until $\ell = N$, we must have: (1) $\sum_{j} Q_{ij} = 1/N$ for $1\leq i \leq N$; (2) $\sum_{i} Q_{ij} = p_j$ for $1 \leq j \leq N$; (3) there are at most three nonzero entries in each row of $Q$.
\qed

Next, we show that using twice as many $q$-vectors, we can represent the probability vector $p$ with 2-sparse probabilities. The basic idea is that a 3-sparse vector can always be written as a uniform mixture of two 2-sparse vectors: Let $p$ be a 3-sparse vector with $0 < p_a \leq p_b \leq p_c$, for indices $a, b, c$. Then we define the probability vectors $q_1$ and $q_2$ as follows: $(q_1)_a = 2p_a$, $(q_1)_c = 1 - 2 p_a$, $(q_2)_b = 2 p_b$, and $(q_2)_c = 1 - 2p_b$. The correctness can be verified by $q_1,q_2$ are 2-sparse, $q_1/2 + q_2/2 = p$, and $p_a\leq 1/2$ and $p_b \leq 1/2$.

\begin{lemma}\label{lemma: any distributin with 2-sparse}
Every $N$-dimensional probability vector $p$ can be expressed as a uniform mixture of~$2N$ 2-sparse probability vectors. That is, there exists a set $\{ q(i) \}_i$ of $2N$ 2-sparse probability vectors such that
$$
p = \sum_i \frac{1}{2N} q(i).
$$
\end{lemma}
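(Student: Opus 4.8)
The plan is to combine the preceding lemma with the elementary observation, stated in the paragraph immediately above, that a single 3-sparse probability vector decomposes into a uniform mixture of two 2-sparse ones. First I would invoke the preceding lemma to write $p = \sum_{i=1}^{N} \frac{1}{N}\, q(i)$, where each $q(i)$ is a 3-sparse probability vector. This already reduces the problem to the purely local task of splitting each $q(i)$ individually, after which the outer uniform weights $1/N$ will simply combine with the inner weights $1/2$ to give the claimed $1/(2N)$.

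Second, for each fixed $i$ I would apply the 2-sparse splitting to $q(i)$. Denote by $a,b,c$ the (at most three) indices on which $q(i)$ is supported, relabeled so that $0 < (q(i))_a \le (q(i))_b \le (q(i))_c$, and define $q_1(i),q_2(i)$ by the explicit rule $(q_1(i))_a = 2\,(q(i))_a$, $(q_1(i))_c = 1 - 2\,(q(i))_a$, $(q_2(i))_b = 2\,(q(i))_b$, $(q_2(i))_c = 1 - 2\,(q(i))_b$, with all remaining entries zero. A direct check then gives $\tfrac12 q_1(i) + \tfrac12 q_2(i) = q(i)$ coordinatewise, where for coordinate $c$ one uses the normalization $(q(i))_a + (q(i))_b + (q(i))_c = 1$. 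Substituting back and collecting terms yields $p = \frac{1}{2N}\sum_{i}\bigl(q_1(i) + q_2(i)\bigr)$, exhibiting $p$ as a uniform mixture of the $2N$ 2-sparse vectors $\{q_1(i),q_2(i)\}_i$, as required.

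The only point needing genuine care — and what I expect to be the main, though modest, obstacle — is confirming that each $q_1(i)$ and $q_2(i)$ is an honest probability vector, i.e. that $2\,(q(i))_a \le 1$ and $2\,(q(i))_b \le 1$ so that the $c$-entries stay nonnegative. These follow from the ordering and normalization: $2\,(q(i))_b \le (q(i))_b + (q(i))_c = 1 - (q(i))_a \le 1$, and a fortiori for index $a$. I would additionally dispatch the degenerate cases in which some $q(i)$ has fewer than three nonzero entries: such a $q(i)$ is already 2-sparse, so one may simply take $q_1(i) = q_2(i) = q(i)$, keeping the construction well-defined for every $i$. With these checks in place the decomposition is complete, and no further estimation is needed since the identity is exact.
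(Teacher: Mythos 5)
Your proof is correct and follows essentially the same route as the paper: invoke the preceding 3-sparse decomposition lemma, split each genuinely 3-sparse $q(i)$ via the exact same rule $(q_1)_a = 2(q(i))_a$, $(q_1)_c = 1-2(q(i))_a$, $(q_2)_b = 2(q(i))_b$, $(q_2)_c = 1-2(q(i))_b$, and duplicate any $q(i)$ that is already 2-sparse. If anything, you are more careful than the paper, which only asserts that "correctness can be verified" where you explicitly check the coordinatewise identity and the nonnegativity bounds $2(q(i))_a \le 1$ and $2(q(i))_b \le 1$.
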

\textit{Proof.} Taking Lemma 5, we know there exists a set $\{ q(i)\}_i$ of $N$ 3-sparse probability vectors such that $p = \sum_i \frac{1}{N} q(i)$. Now, for each $q(i)$, if it is 3-sparse but not 2-sparse, we can always split it into two 2-sparse vectors using the observation before the lemma, i.e., $q(i) = q_1(i)/2 + q_2(i)/2$; If $q(i)$ is  2-sparse, we simply produce two copies of it.
This construction satisfies the conditions of the lemma.  
\qed

With these lemmas in place, we can now prove the main theorem.

\begin{theorem}[Exact universality of IQP-circuits with hidden qubits]
For any target probability distribution $p$ over $\{0,1\}^n$, there exists an IQP circuit acting on $n$ visible qubits and $m = n+1$ hidden qubits that produces $p$ exactly when we measure the final $n$ qubits.
\end{theorem}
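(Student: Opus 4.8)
The plan is to assemble the three preceding lemmas into a single counting argument, matching the number of rank-one terms available in the reduced density matrix to the number of 2-sparse vectors needed to decompose $p$. First I would apply Lemma~\ref{lemma: any distributin with 2-sparse} with $N = 2^n$ to write the target distribution as a uniform mixture
\begin{equation*}
p = \frac{1}{2^{n+1}} \sum_{i=1}^{2^{n+1}} q(i),
\end{equation*}
where each $q(i)$ is a 2-sparse probability vector over $\{0,1\}^n$. This immediately fixes the number of hidden qubits: we want exactly $2^m = 2^{n+1}$ rank-one components, so $m = n+1$, which is precisely the count in the theorem.

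Next I would invoke Lemma~\ref{lemma: rho_b decomposition}, which states that for an IQP circuit on $m+n$ qubits with the final Hadamard layer omitted, the reduced state on the $n$ visible qubits is $\rho_2 = 2^{-m} \sum_{k} |\psi_k\rangle\langle\psi_k|$, each $|\psi_k\rangle$ a UMA state whose phases $\{\theta_{k,y}\}_y$ form an independent block across $k$. The crucial structural fact is that these $2^m$ blocks of phases are freely and independently assignable, so I can engineer each $|\psi_k\rangle$ separately within one circuit. For each $k \in \{0,1\}^{n+1}$, I would use Lemma~\ref{lemma: 2-sparse distributin with psi_k} to choose the phases $\{\theta_{k,y}\}_y$ so that $|\psi_k\rangle = \alpha_k |\tilde a_k\rangle + \beta_k |\tilde b_k\rangle$, where $\{a_k, b_k\}$ is the support of $q(k)$ with $|\alpha_k|^2 = q(k)_{a_k}$ and $|\beta_k|^2 = q(k)_{b_k}$; that lemma guarantees such a two-term $X$-eigenstate superposition is UMA and hence realizable in the form of Eq.~\eqref{eq: psi}.

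The final step is to read off the visible measurement statistics. Since the hidden qubits are traced out and visible-register operations commute with that partial trace, the computational-basis distribution obtained after the full IQP circuit (including the trailing Hadamard layer on the visible qubits) equals the $\{+,-\}^n$-basis measurement distribution of $\rho_2$. Because each $|\psi_k\rangle$ measured in that basis reproduces $q(k)$, linearity gives, for every bitstring $b$,
\begin{equation*}
\langle \tilde b | \rho_2 | \tilde b \rangle = \frac{1}{2^{n+1}} \sum_{k} \lvert\langle \tilde b | \psi_k \rangle\rvert^2 = \frac{1}{2^{n+1}} \sum_{k} q(k)_b = p_b,
\end{equation*}
and the final Hadamards map each $X$-eigenstate label $\tilde b$ to the computational string $b$, yielding $p$ exactly.

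I expect the main subtlety to be the bookkeeping between the probability-vector language of Lemma~\ref{lemma: any distributin with 2-sparse} and the quantum-state language of Lemmas~\ref{lemma: rho_b decomposition} and~\ref{lemma: 2-sparse distributin with psi_k}: one must verify that the $X$-eigenstate basis in which the states $|\psi_k\rangle$ are naturally 2-sparse is exactly the basis that the trailing Hadamard layer rotates into the computational basis, so that the outcome labels line up and no spurious relabelling is introduced. The second point worth making explicit is the independence of the $2^m$ phase blocks, since this is what licenses the \emph{simultaneous} realization of all $2^{n+1}$ prescribed UMA states inside a single diagonal layer $D(\theta)$; once both are checked, the theorem follows directly from the count $2^{n+1} = 2N$.
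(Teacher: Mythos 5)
Your proposal is correct and follows essentially the same route as the paper's own proof: decompose $p$ into $2^{n+1}$ 2-sparse vectors via Lemma~\ref{lemma: any distributin with 2-sparse}, realize each one as a UMA state $|\psi_k\rangle$ via Lemma~\ref{lemma: 2-sparse distributin with psi_k}, and combine them through the uniform mixture structure of Lemma~\ref{lemma: rho_b decomposition} with $m = n+1$. If anything, your write-up is more explicit than the paper's about the independence of the phase blocks and the basis bookkeeping under the trailing Hadamard layer, both of which the paper leaves implicit.
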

\begin{proof}
By Lemma~\ref{lemma: rho_b decomposition} the reduced density matrix of an IQP circuit with $m$ hidden qubits can be exactly written as sum of $2^m$ pure states density matrices $|\psi_k\rangle\langle\psi_k|$.
The output distribution will be a uniform mixture of the distributions obtained from measuring one of the $|\psi_k\rangle$ states in the basis specified by Pauli-X eigenstates. 
Each $|\psi_k\rangle$ can be expressed as a superposition of strings of Pauli-X eigenstates, and the probability of observing the string $b \in \{0,1\}^n$ is given by $|\langle \tilde{b}| \psi_k\rangle|^2$
with $|\tilde{b}\rangle = H^{\otimes n} |b\rangle  $.
In Lemma~\ref{lemma: 2-sparse distributin with psi_k} we have shown how we can encode any 2-sparse distribution in  $|\psi_k\rangle$.
 On the other hand, Lemma~\ref{lemma: any distributin with 2-sparse} states that any probability distribution over $\{0,1\}^n$ can be decomposed as a uniform sum of $2^{n+1}$ such 2-sparse distributions. Therefore, by choosing $m = n+1$, we have sufficient expressive power to represent any target distribution $p$ exactly using an IQP circuit with hidden qubits.
\end{proof}

\subsection{Remark on optimality}
We note that it is not clear if the construction of the main Theorem above is optimal in terms of the number of ancillary qubits, i.e., hidden units for the worst-case distributions. 
On the one hand, the purpose of this construction is to generate a particular reduced density matrix over the last $n$ qubits, and it is well known that this can be purified using at most $n$ additional qubits. However, in general, the states on the purifying system will then not correspond to UMA states, this purified state may not be obtainable by an IQP circuit in general.
An $n$ instead of $n+1$ qubit overhead could also be obtained if any $N$-dimensional probability vector could be decomposed to a uniform mixture of the sum of just $N$ 2-sparse probability vectors (we achieve 3-sparse). We do not have a strong argument that this is impossible, but we have equally failed to provide a provably correct construction.
Another route to reducing resources is to encode more than 2 outcome distributions in the states $|\psi_k\rangle$. However, already for 2 qubit (4 dimensional) case, the non-universality results of \cite{recio2025train} (page 47) imply that there exist distributions that cannot be encoded into UMA states. 
In general, any $n$-qubit UMA state characterized by $2^n-1$ phases $(\theta_j)_j$ (counting for the global phase) can only generate probability vectors with probabilities expressible with $p_k \propto | 1+ \sum_{j=2}^{2^n} (-1)^{z\cdot j} e^{i\theta_j} |^2 $, which, as said, do not include all distributions already for $n=2$. Already, one hidden unit allows us to recover arbitrary convex combinations of any two such distributions, and further hidden units allow for more complicated combinations. The minimal point at which this becomes universal we leave for future work.

\section{Discussion}\label{s:discussion}
This short note provides two simple proofs of the universality of IQP-QCBM when hidden (ancillary) qubits are used. 
This provides additional motivation to study these models as generative modes in near-term quantum computing. 
We, however, highlight that the purpose of this construction is theoretical, and should not be used as a suggestion to use large numbers of hidden units in practice. Since we deal with exponential outcome spaces, universality requires dealing with exponentially many trainable parameters, and in our construction, increasing the register size by a factor of 3. The latter in general would cause a quicker decay of gradients, and the prior makes the setting immediately intractable. Furthermore, fully universal models are often a bad idea as they likely lead to significant overfitting\footnote{What overfitting means in generative modelling is a subtle issue which we do not discuss here.}.
This work does further clarify the role and potential of using more complicated diagonal encoding operations, and additional hidden qubits as structural parameters which can be used to balance expressivity, overfitting, and trainability.

\section{Acknowledgements}

VD, HW, and AK acknowledge the support from the Dutch National Growth Fund (NGF), as part of the Quantum Delta NL programme. VD acknowledges support from the Dutch Research Council (NWO/OCW), as part of the Quantum Software Consortium programme (project number 024.003.03). This project was also co-funded by the European Union (ERC CoG, BeMAIQuantum, 101124342).

\bibliography{ref.bib}

\end{document}